\newcommand{\DEC}[0]{\operatorname{DEC}}
\newcommand{\charikar}[0]{{\sc CountSketch$_b$}} 
\newcommand{\countsketch}{\textsc{CountSketch}\xspace}
\newcommand{\freq}[0]{{\sc \bf ApproxFreqElements}}
\newcommand{\rarity}[0]{{\sc \bf ApproxRarity}}
\newcommand{\similarity}[0]{{\sc \bf ApproxSimilarity}}
\newcommand{\fk}[0]{\phi_k}
\begin{document}
\title{How to Catch $L_2$-Heavy-Hitters on Sliding Windows}

\author{
	Vladimir Braverman\inst{1}
 \and Ran Gelles\inst{2}
 \and Rafail Ostrovsky\inst{3}
}

\institute{
Department of Computer Science, Johns Hopkins University, \email{vova@cs.jhu.edu}.
\and
Department of Computer Science,
 University of California, Los Angeles,  \email{gelles@cs.ucla.edu}.
\and
Department of Computer Science and Mathematics,
   University of California, Los Angeles, \email{rafail@cs.ucla.edu}. \\
 }
\maketitle

\begin{abstract}
Finding heavy-elements (heavy-hitters) in streaming data is one of the central, and well-understood tasks.
Despite the importance of this problem, when considering the {\em sliding windows} model of streaming (where elements eventually expire) 
the problem of finding $L_2$-heavy elements has remained completely open 
despite multiple papers and considerable success in finding $L_1$-heavy elements. 

Since the $L_2$-heavy element problem doesn't satisfy certain conditions,
existing methods for sliding windows algorithms, such as 
smooth histograms or exponential histograms are not directly applicable 
to it.
In this paper, we develop the first polylogarithmic-memory algorithm for finding
$L_2$-heavy elements in the sliding window model.

Our technique 
allows us not only to find $L_2$-heavy
elements, but also heavy elements with respect to any $L_p$ with $0<
p \le 2$ on sliding windows. 
By this we completely ``close the gap'' and resolve the
question of finding $L_p$-heavy elements in the  sliding window model with
polylogarithmic memory, 
since it is well known that for $p> 2$ this task is impossible.  

We demonstrate a broader applicability
of our 
method on two additional examples: we show
how to obtain a sliding window approximation of 
the similarity of two streams, and of the fraction of elements
that appear exactly a specified number of times within the window
(the $\alpha$-rarity problem). In these two illustrative examples of
our method, we replace the current {\em expected} memory bounds with
{\em worst case} bounds. 

\end{abstract}

\thispagestyle{empty}

\section{Introduction} \label{sec:intro}

A {\em data stream} $S$ is an ordered multiset of elements $\{a_0, a_1, a_2 \ldots \}$
where each element $a_t \in \{1, \ldots, u\}$ arrives at time $t$.
In the {\em sliding window model} we consider
at each time $t \ge N$ the last $N$ elements of the stream, i.e.\ the window
$W=\{a_{t-(N-1)}, \ldots,    a_{t} \}$.
These elements are called  {\em active}, whereas
elements that arrived prior to the current window
$\{ a_i \ \vert\  0\le i < t-(N-1)\}$ are {\em expired}.
For $t < N$, the window consists of all the elements received so far, $\{a_0, \ldots,    a_{t} \}$.

Usually, both $u$ and $N$ are considered to be extremely large so it is not applicable to save the entire stream (or even one entire window) in memory. The problem is to be able to calculate various characteristics about the window's elements using 
small amount of memory (usually, polylogarithmic in $N$ and $u$).
We refer the reader to the books of Muthukrishnan~\cite{muth05} and
Aggarwal (ed.)~\cite{aggarwal07}
for extensive surveys on data stream models and algorithms.

One of the main open problems in data streams deals with the relations between
the different streaming models~\cite{IITK06}, specifically between the unbounded stream model
and the sliding window model. In this paper we provide  another important step in clarifying
the connection between these two models by showing that finding $L_p$-heavy hitters
is just as doable on sliding windows as on the entire stream.

We focus on approximation-algorithms for
certain statistical characteristics of the data streams, specifically,
finding frequent elements.
The problem of finding frequent elements in a stream is useful
for many applications, such as network monitoring~\cite{SW02}
and DoS prevention~\cite{EV03,CKMS03,BAA07},
and was extensively explored over the last decade
(see~\cite{muth05,CH08} for a definition of the problem and a survey of
existing solutions, as well as~\cite{CCF02,MM02,GDDLM03,JQSYZ03,CM04,AM04,CM05hot,ZG08,HT08}).
%

We say that an element is \emph{heavy}
if it appears more times than a constant fraction of some $L_p$ norm of the stream.
Recall that for $p>0$, the $L_p$ norm of the frequency vector\footnote{%
Throughout the paper we use the term ``$L_p$ norm'' to indicate the $L_p$ norm of the frequency
vector, i.e., the $p$th root of the $p$th frequency moment $F_p=\sum_i n_i^p$~\cite{AMS99}, rather than the norm of the data itself.}
is defined by $L_p = (\sum_i n_i^p)^{1/p}$, where $n_i$ is
the frequency of element $i\in[u]$, i.e., the number of times $i$ appears
in the window. 
Since different $L_p$ can be considered, 
we obtain several different ways to define a ``heavy'' element.
Generally speaking (as mentioned in~\cite{Indyk09}), 
when considering frequent elements (heavy-hitters) with respect to~$L_p$,
the higher $p$ is, the better. Specifically, 
identifying frequent elements with respect to $L_2$ is better than~$L_1$
since an $L_1$~algorithm can always be replaced with an $L_2$~algorithm, with less or equal memory consumption (but not vice versa).

Naturally, finding frequent elements with respect to the $L_2$ norm is
a more difficult task (memory-wise) than the equivalent $L_1$ problem.
To demonstrate this fact let us regard the following example:
let $S$ be a stream of size $N$, in which
the element $a_1$ appears $\sqrt{N}$ times,
while the rest of the elements $a_2, \ldots, a_{N-\sqrt{N}}$ appear
exactly once in $S$. Say we wish to identify $a_1$ as an heavy element.
Note that  $n_1 = \frac{1}{\sqrt{N}}L_1$ while $n_1= cL_2$, 
where $c$ is a constant, lower bounded by  $\frac{1}{\sqrt{2}}$.
Therefore, as $N$ grows, $n_1/L_1\to 0$ goes to zero, while $n_1/L_2$ is bounded by a constant.
If an algorithm finds elements which are heavier than $\gamma L_p$
with memory $poly(\gamma^{-1},\log N,\log u)$, then for $p=2$ we get  a 
polylogarithmic memory, while for $p=1$ the memory consumption is super-logarithmic.

We focus on solving the following $L_2$-heaviness problem:
\begin{definition}[$(\gamma,\epsilon)$-approximation of $L_2$-frequent elements]\label{def:L2heavy}  
For $0 < \epsilon , \gamma < 1$,
output any element $i\in[u]$
such that $n_i > \gamma L_2$ and no element such that $n_i < (1-\epsilon)\gamma L_2$.
\end{definition}
\noindent The $L_2$ norm is the most powerful norm for which we can expect
a polylogarithmic solution, for the frequent-elements problem. This is
due to the known lower bound of $\Omega(u^{1-2/p})$ for calculating $L_p$
over a stream~\cite{SS02,BJKS02}.
%

There has been a lot of progress on the question of
finding  $L_1$-frequent elements, in the sliding window model~\cite{AM04,ZG08,HT08},
however those algorithms cannot be used to find $L_2$-frequent elements with an efficient memory.
In 2002, Charikar, Chen and Farach-Colton~\cite{CCF02} developed the \countsketch algorithm
that can approximate the ``top $k$'' frequent-elements on \emph{an unbounded stream},
where $k$ is given as an input. Formally, their algorithm outputs
only elements with frequency larger than $(1-\epsilon)\fk$, where $\fk$ is the frequency
of the $k$th most frequent element in the stream, using memory proportional
to $L_2^2 / (\epsilon\fk)^2$.
Since the ``heaviness'' in this case is relative to  $\fk$, and the memory
is bounded by the fraction $L_2^2 / (\epsilon\fk)^2$,
Charikar et al.'s algorithm finds in fact heaviness in terms
of the $L_2$ norm.
A natural question is whether one can develop an algorithm for
finding frequent-elements that appear at least $\gamma L_2$ times in the
{\em sliding window model}, using $poly(\gamma^{-1},\log N,\log u)$ memory.

\medskip\noindent\textbf{Our Results.}
We give the first polylogarithmic algorithm 
for 
finding an $\epsilon$-approximation of the $L_2$-frequent elements
in the sliding window model.
Our algorithm is able
to identify elements that appear within the window a number of times which is
at least a $\gamma$-fraction of the $L_2$ norm of the window, up to a
multiplicative factor of $(1-\epsilon)$. In addition,
the algorithm guarantees to output {\em all\/} the elements
with frequency at least $(1+\epsilon)\gamma L_2$.
\begin{theorem}\label{thm:main}
There exists an efficient sliding window algorithm that outputs a $(\gamma,\epsilon)$-approximation of the $L_2$-frequent-elements, with probability at least~$1-\delta$ and memory~$poly(\epsilon^{-1},\gamma^{-1}, \log N, \log \delta^{-1})$.
\end{theorem}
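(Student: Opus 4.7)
The plan is to instantiate the \emph{semi-smooth} template described in the introduction, using the smooth function $g = L_2$ as a scaffold for the non-smooth target (the set of $L_2$-heavy elements), and plugging in the \charikar{} sketch of Charikar, Chen and Farach-Colton~\cite{CCF02} on every bucket. First, I maintain a smooth histogram for $L_2$~\cite{BO07} with smoothness parameter $\beta = \Theta(\epsilon^2\gamma^2)$; at any time it outputs a nested chain $B_1 \supseteq B_2 \supseteq \cdots \supseteq B_s$ of suffix-substreams, all ending at the current time, with adjacent buckets satisfying $L_2(B_{i+1}) \ge (1-\beta)L_2(B_i)$, and with the active window $W$ sandwiched between two consecutive buckets $B_{i+1}\subseteq W\subseteq B_i$. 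In parallel, for each live bucket $B$ I run an instance of \charikar{} with precision $\epsilon\gamma/4$ and failure probability $\delta/\mathrm{poly}(N)$; this guarantees $|\hat n_j - n_j(B)| \le (\epsilon\gamma/4)\,L_2(B)$ for every query.

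The heart of the argument is a \emph{frequency-transfer lemma}: the per-element frequencies inside $B_{i+1}$ already approximate those of $W$ to within the tolerance needed. Since $B_{i+1}\subseteq W\subseteq B_i$, expanding $n_j(W) = n_j(B_{i+1}) + n_j(W\setminus B_{i+1})$ and discarding the nonnegative cross-term in the expansion of $L_2(W)^2$ gives, for every $j$,
\[
0 \le n_j(W) - n_j(B_{i+1}) \le L_2(W\setminus B_{i+1}) \le \sqrt{L_2(B_i)^2 - L_2(B_{i+1})^2} \le \sqrt{2\beta}\,L_2(B_i).
\]
With $\beta = \Theta(\epsilon^2\gamma^2)$ this slack is at most $(\epsilon\gamma/4)L_2(W)$. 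Combined with the \charikar{} error on $B_{i+1}$ and the observation that $L_2(B_{i+1})$ is itself a $(1\pm O(\beta))$-approximation of $L_2(W)$, the threshold $(1-\epsilon/2)\gamma\,L_2(B_{i+1})$ separates items with $n_j(W) > \gamma L_2(W)$ from those with $n_j(W) < (1-\epsilon)\gamma L_2(W)$, fulfilling Definition~\ref{def:L2heavy}.

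For the memory bound, the smooth-histogram invariant caps the number of buckets at $O(\beta^{-1}\log N) = O(\epsilon^{-2}\gamma^{-2}\log N)$; each carries a \charikar{} instance of $O((\epsilon\gamma)^{-2}\log(N/\delta))$ words together with an $L_2$ estimator. A union bound over the $\mathrm{poly}(N)$ possible query times is absorbed into the $\log(1/\delta)$ factor, yielding total space $\mathrm{poly}(\epsilon^{-1},\gamma^{-1},\log N,\log\delta^{-1})$ words.

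The main obstacle I anticipate is the quantitative calibration of $\beta$ against the three compounded approximation errors---the smooth-histogram error on $L_2$, the $B_{i+1}\!\to\!W$ transfer error, and the \charikar{} additive error---together with the $L_2(B_{i+1})$-versus-$L_2(W)$ discrepancy appearing inside the threshold. Choosing $\beta$ too large destroys the frequency-transfer inequality, while choosing it too small inflates the bucket count (and hence the number of parallel sketches); the displayed calculation pinpoints $\beta=\Theta(\epsilon^2\gamma^2)$ as the natural sweet spot, and the remaining work is careful bookkeeping to ensure that all three errors add up to at most $\epsilon\gamma L_2(W)$.
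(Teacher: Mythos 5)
Your proposal is sound in outline and does establish the theorem (polynomial space), but it takes a genuinely different route from the paper's. The paper queries \charikar{} on the \emph{outer} bucket $A_1 \supseteq_r W$, keeps the histogram granularity independent of $\gamma$ (it runs the $L_2$ smooth histogram at precision $\epsilon/2$, so $O(\epsilon^{-2}\log N)$ buckets), and pushes heaviness from $A_1$ back to $W$ via the squared inequality $n_i(W)^2 > n_i(A_1)^2 - \epsilon L_2(A_1)^2$ (Lemma~\ref{lem:windowBucket}) together with the filter $\hat n_i > \frac{1}{1+\epsilon}\gamma\hat L_2$; the price is that it only obtains the relaxed guarantee of Definition~\ref{def:L2heavyVar}, with thresholds $(1\pm c\epsilon)\gamma L_2$. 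You instead query the \emph{inner} bucket $B_{i+1}$ (the paper's $A_2 \subseteq_r W$) and prove a per-element additive transfer bound $0 \le n_j(W)-n_j(B_{i+1}) \le \sqrt{L_2(B_i)^2 - L_2(B_{i+1})^2}$, which is correct (superadditivity of $L_2^2$ under concatenation, plus domination of $W\setminus B_{i+1}$ by $B_i\setminus B_{i+1}$) and buys a direct additive estimate of $n_j(W)$, hence the sharper Definition~\ref{def:L2heavy}-style separation --- at the cost of a histogram whose granularity must depend on $\gamma$ as well as $\epsilon$, i.e.\ more parallel sketches than the paper needs.

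Two calibration points need fixing, though neither threatens the poly-space claim. First, you conflate the two smooth-histogram parameters: adjacent buckets are guaranteed to satisfy $L_2(B_{i+1}) \ge (1-\alpha)L_2(B_i)$ (or to differ in a single stream element), not $(1-\beta)$-closeness, and for $L_2$ one has $\beta = \alpha^2/2$ with bucket count $O(\beta^{-1}\log N)$. So to make the transfer slack $\sqrt{2\alpha}\,L_2(B_i)$ at most $(\epsilon\gamma/4)L_2(W)$ you need $\alpha = \Theta(\epsilon^2\gamma^2)$, hence $\beta = \Theta(\epsilon^4\gamma^4)$ and $O(\epsilon^{-4}\gamma^{-4}\log N)$ buckets, not $O(\epsilon^{-2}\gamma^{-2}\log N)$ as written; the resulting space is roughly $O(\epsilon^{-6}\gamma^{-6}\log N\log\frac{N}{\delta})$ words, worse than the paper's bound but still polynomial. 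Second, list membership: \charikar{} with heaviness parameter $\gamma$ only guarantees to output elements whose frequency \emph{in the bucket} exceeds $(1+\epsilon')\gamma L_2(B_{i+1})$, whereas an element with $n_j(W) > \gamma L_2(W)$ may have $n_j(B_{i+1})$ slightly below $\gamma L_2(B_{i+1})$; since you target the exact threshold of Definition~\ref{def:L2heavy}, you must run \charikar{} with a slightly smaller heaviness parameter (equivalently larger $k$), say $\gamma/2$, so that such elements are guaranteed to appear in the candidate list before your threshold test. With these adjustments your argument goes through.
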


We note that the \countsketch algorithm 
works in the unbounded model and does not apply directly on sliding windows.
Moreover, \countsketch  solves a slightly different (yet related) problem,
namely, the top-$k$ problem, rather than the $L_2$ heaviness.
To achieve our result on $L_2$ heavy hitters, 
we combine in a non-trivial way
the scheme of  Charikar et al.\@ with a sliding-window approximation
 for $L_2$ as given by Braverman and Ostrovsky~\cite{BO07}.
Variants of these techniques sufficient to derive similar results were known since $2002$,\footnote{Indeed, we use the algorithm of Charikar et al.~\cite{CCF02} that is known since 2002. Also, it is possible to replace (with some non-trivial effort) our smooth histogram method for $L_2$ computation with the algorithm of Datar, Gionis, Indyk and Motwani~\cite{DGIM02} for $L_2$ approximation.}
however no algorithm for $L_2$ heavy hitters was reported despite several papers on $L_1$ heavy hitters.

Our solution gives another step in the direction of making
a connection between the unbounded and sliding window models,
as it provides an answer for the very important question
of heavy hitters in the sliding window model. The result joins
the various solutions of finding $L_1$-heavy hitters in sliding windows~\cite{GDDLM03,AM04,NL05,BAA07,ZG08,HT08,HT10},
and can be used in various algorithms that require identifying $L_2$ heavy hitters, such as~\cite{IW05,BGKS06} and others. 
More generally, our paper resolves the question of finding $L_p$-heavy elements on sliding windows for all values of~$p$ that allows small memory one-pass solutions (i.e.\ for $0< p\leq 2$). By this we completely close the gap between the case of~$p\le 1$, solved by previous works,  and the impossibility result for the case of~$p>2$.

\smallskip\noindent\textbf{A Broader Perspective. }
In fact, one can consider the tools we develop for the frequent elements problem
as a general method that allows obtaining a sliding window solution
out of an algorithm for the unbounded model, for a wide
range of functions. We explain this concept in this section.

Many statistical properties were aggregated into families, and efficient
algorithms were designed for those families. For instance,
Datar, Gionis, Indyk and Motwani, in their seminal paper~\cite{DGIM02} showed that
a sliding window estimation is easy to
achieve for any function which is {\em weakly-additive}
by using a data structure named {\em exponential histograms}~\cite{DGIM02};
for certain functions that decay with time, one can maintain time-decaying aggregates~\cite{CS06};
another data structure, named {\em smooth-histogram}~\cite{BO07} can be used in order
to approximate an even larger set of functions, known as {\em smooth functions}.
See~\cite{aggarwal07} for a survey of synopsis construction.

In this paper we introduce a new concept which uses a smooth-histogram
in order to perform sliding window approximation of {\em non-smooth} properties.
Informally speaking, the main idea is to relate the
non-smooth property $f$ with some other, smooth\footnote{Of course, other kinds of aggregations can be used, however our focus is on smooth histograms.}, property $g$,
such that changes in $f$ are bounded by the changes in $g$.
By maintaining a smooth-histogram for the smooth function $g$, we {\em partition} the
stream into sets of sub-streams (buckets).
Due to the properties of the smooth-histogram we can bound the
error (of approximating~$g$) for every sub-stream, and thus get an approximation of~$f$.
We use the term {\em semi-smooth} to describe these kinds of algorithms.

We demonstrate the above idea by showing a concrete efficient sliding window
algorithm for the properties
of rarity and similarity~\cite{DM02}; we stress that neither is smooth
(see Section~\ref{sec:simrar} for definitions of these problems).
Although there already exist algorithms for these problems with   {\em expected} polylogarithmic memory~\cite{DM02},
our techniques improve these results and obtain a {\em worst case} memory consumption of essentially the same magnitude (up to a factor of~$\log\log N$).

In addition to the properties of rarity and similarity, we believe that the tools we develop
here can be used to build efficient sliding window approximations for many other (non-smooth)
properties and provide a general new method for computing on sliding windows.
Indeed, in a subsequent work Tirthapura and Woodruff~\cite{TW12} use our methods to compute various \emph{correlated aggregations}.
It is important to note that trying to build a smooth-histogram (or any other known sketch)
directly to $f$ will not preserve the
required invariants,
and the memory consumption might
not be efficient.

\bigskip
\noindent\textbf{Previous Works.}
\vspace{-3ex}
\paragraph{Frequent elements.}
Finding elements that appear many times in the stream (``heavy hitters'')
is a very central question and thus has been extensively studied
both for the unbounded model~\cite{DLM02,KSP03,CM04,MAA05}
and for the sliding window model~\cite{AM04,NL05,ZG08,HT08} as well as other variants
such as the {\em offline stream} model~\cite{MM02},
insertion and deletion model~\cite{CM05hot,JQSYZ03}, finding
heavy-distinct-hitter~\cite{BAA07}, etc.
Reducing the processing time was done by~\cite{LT06} into $O(\frac1\epsilon)$
and by~\cite{HT10} into $O(1)$.

Another problem which is related to finding the heavy hitters,
is the top-$k$ problem, namely, finding the $k$ most frequent elements.
As mentioned above, Charikar, Chen and Farach-Colton~\cite{CCF02} provide an algorithm that
finds the $k$ most frequent elements in the unbounded model (up to a precision of $1\pm\epsilon$).
Golab, DeHaan, Demaine, L\'{o}pez-Ortiz and Munro~\cite{GDDLM03}
solve this problem in the {\em jumping window} model.

\paragraph{Similarity and $\alpha$-rarity.}
The similarity problem was defined in order to give a rough estimation
of closeness between files over the web~\cite{BGMZ97} (and independently in~\cite{Cohen97}).
Later, it was shown how to use min-hash functions~\cite{Indyk99} in order to sample from the stream, and estimate the similarity of two streams.

The notion of $\alpha$-rarity, introduced by Datar and Muthukrishnan~\cite{DM02},
is that of finding
the fraction of elements that appear exactly $\alpha$ times within the stream.
This quantity can be seen as finding
the fraction of elements with frequency within certain bounds.

The questions of rarity and similarity were analyzed, both for
the unbounded stream and the sliding window models, by Datar and
Muthukrishnan~\cite{DM02}, achieving an expected memory bound of
$O(\log N + \log u)$ words of space for a constant $\epsilon, \alpha, \delta$.
At the bit level, their algorithm requires
$O(\alpha\cdot\epsilon^{-3}\log\delta^{-1}\log N(\log N+\log u))$ bits for $\alpha$-rarity and
$O(\epsilon^{-3}\log\delta^{-1}\log N(\log N+\log u))$ bits
for similarity, with $1-\delta$ being the probability of success\footnote{These bounds
are not explicitly stated in~\cite{DM02}, but follow from the analysis (see Lemma~1
and Lemma~2 in~\cite{DM02}).}.

\section{Preliminaries}\label{sec:pre}

\subsection{Notations}
We say that an algorithm $A_f$ is an $(\epsilon,\delta)$-approximation of a function $f$, if for any input $S$, 
$(1-\epsilon)f(S) \le A_f(S) \le (1+\epsilon)f(S)$, except with probability $\delta$ over $A_f$'s coin tosses.
We denote this relation as $A_f\in (1\pm\epsilon)f$ for short.
We denote an output of an approximation
algorithm with a hat symbol, e.g., the estimator of $f$ is denoted $\hat f$.

The set $\{1,2, \dotsc, n\}$ is usually denoted as $[n]$.
If a stream $B$ is a suffix of $A$, we denote $B \subseteq_r A$. For instance,
let $A=\{q_1, q_2, \ldots, q_n\}$ then $B=\{q_{n_1}, q_{n_1+1}, \ldots, q_n \} \subseteq_r A$,
for $1 \le n_1 \le n$.
The notation $A\cup C$ denotes the concatenation
of the stream $C=\{c_1, c_2, \dotsc, c_m\}$ to the end of stream
$A$, i.e., $A\cup C= \{q_1, q_2, \ldots, q_n, c_1, c_2, \ldots c_m\}$.
The notation $|A|$ denotes the number of different elements in the stream $A$, that is
the cardinality of the {\em set} induced by the multiset $A$.
The size of the stream (i.e.\ of the multiset) $A$
will be denoted as $\|A\|$, e.g., for the example above $\|A\|=n$.

We use the notation $\tilde O(\cdot)$ to indicate an asymptotic bound
which suppresses terms
of magnitude\\ $poly(\log\frac1\epsilon,\log\log\frac1\delta,\log\log N, \log\log u)$.

\subsection{Smooth histograms}
Recently, Braverman and Ostrovsky~\cite{BO07} showed that a function
$f$ can be $\epsilon$-approximated in the sliding window model, if 
$f$ is a \emph{smooth function}, and if it can be calculated (or approximated) 
in the unbounded stream model.
Formally,
\begin{definition}\label{smooth} 
A polynomial function $f$ is $(\alpha,\beta)$-smooth if it satisfies
 the following properties:
(i) $f(A)\ge0$; (ii) $f(A) \ge f(B)$ for $B\subseteq_r A$; and
(iii)  there exist $0<\beta\le\alpha<1$ such that if $(1-\beta)f(A) \le f(B)$ for $B\subseteq_r A$, then
$(1-\alpha)f(A\cup C)\le f(B\cup C)$ for any~$C$.
\end{definition}
\noindent
If an $(\alpha,\beta)$-smooth $f$ can be calculated
(or ($\epsilon,\delta$)-approximated) on an unbounded stream
with memory $g(\epsilon,\delta)$, then there exists an
($\alpha+\epsilon,\delta$)-estimation of $f$ in the sliding window model using
$O(\frac{1}{\beta}\log N(g(\epsilon,\frac{\delta\beta}{\log N})+\log N))$ bits~\cite{BO07}.

The key idea is to construct a ``smooth-histogram'', 
a structure that contains estimations on $O(\frac{1}{\beta}\log N)$-suffixes of the stream, $A_1 \supseteq_r A_2 \supseteq_r \ldots \supseteq_r A_{c\frac{1}{\beta}\log(n)}$.
Each suffix $A_i$ is called {\em a Bucket}.  Each new element in the stream initiates a new bucket, however adjacent buckets with a close estimation value are removed (keeping only one representative).
Since the function is ``smooth'', i.e., monotonic and slowly-changing, it is enough to save $O(\frac{1}{\beta}\log N)$ buckets in order to maintain a reasonable approximation of the window. At any given time, the current window $W$ is between buckets $A_1$ and $A_2$, i.e. $A_1 \supseteq_r W \supseteq_r A_2$. Once the window ``slides'' and the first element of $A_2$ expires, we delete the bucket $A_1$ and renumber the indices so that $A_2$ becomes the new $A_1$, $A_3$ becomes the new $A_2$, etc. 
We use the estimated value of bucket $A_1$ to estimate the value of the current window. The relation between the value of~$f$ on the window and on the first bucket
is given by
$
(1-\alpha)f(A_1)\le f(A_2) \le f(W) \le f(A_1)\ .
$

\section{A Semi-Smooth Estimation of Frequent Elements}\label{sec:freq}

In this section we develop an efficient semi-smooth algorithm for finding elements
that occur frequently within the window.
Let $n_i$ be {\em the frequency}
of  element $i\in \{1, \dotsc, u\}$, i.e.,  the number of times $i$ appears in the window.
The {\em first frequency norm} and the {\em second frequency norm} of the window are defined by
$L_1= \sum_{i=1}^u n_i=N$ and $L_2 = \left( \sum_{i=1}^u n_i^2 \right)^\frac12$.
In many previous works,  (e.g.,~\cite{CM04,AM04,muth05,ZG08,HT08})
the task of finding heavy-elements is defined using the $L_1$ norm as follows,
\begin{definition}[$(\gamma, \epsilon)$-approximation of $L_1$-heavy hitters]\label{def:L1heavy}
Output any element $i\in[u]$
such that ${n_i \ge \gamma L_1}$ and no element such that ${n_i \le (1-\epsilon)\gamma L_1}$.
\end{definition}

Our notion of approximating {\em frequent elements} is given by Definition~\ref{def:L2heavy}.
An equivalent definition 
which we use in our proof is the following:
\begin{definition}\label{def:L2heavyVar}
For $0 < \epsilon , \gamma < 1$,
output all elements $i\in[u]$
with frequency
higher than $(1+\epsilon)\gamma L_2$, and do not output any element with
frequency lower than $(1-\epsilon)\gamma L_2$. 
\end{definition}
Observe that the~$L_2$ approximation is stronger than the above $L_1$
definition. If an element 
is heavy in terms of $L_1$ norm,
it is also heavy in terms of the $L_2$ norm,
\[
n_i \ge \gamma L_1 = \gamma \sum_j n_j \quad \Longrightarrow \quad n_i^2 \ge \gamma^2 \Big(\sum_j n_j\Big)^2 \ge \gamma^2 \sum_j n_j^2 = (\gamma L_2)^2\ ,
\vspace{-3pt}
\]
while the opposite direction does not apply in general.

In order to identify the frequent elements in the current window, use a variant of  
the \countsketch algorithm of Charikar et~al.~\cite{CCF02}, which provides an
$\epsilon$-approximation (in the unbounded stream model) for the following 
top-frequent approximation problem.
\begin{definition}[$(k,\epsilon)$-top frequent approximation]\label{def:ktop}
Output a list of $k$~elements
such that every element~$i$ in the output 
has a frequency $n_i > (1-\epsilon)\fk$, where
$\fk$ is the frequency of the $k$-th most frequent element in the stream.
\end{definition}

The \countsketch algorithm 
guarantees that any element
that satisfies $n_i > (1+\epsilon)\fk$, appears in the output.
This algorithm runs on a stream of size $n$ and
succeeds with probability at least $1-\delta$, and memory complexity
of
\(
O\left ( \left(k+\frac1{(\epsilon\gamma)^2}\right)\log\frac{n}\delta\right )%
\),
 for every $\delta >0$, given that $\fk \ge \gamma L_2$.

Definition~\ref{def:ktop} and Definition~\ref{def:L2heavy} do not describe the same problem, yet
they are strongly connected. In fact, our method allows solving the frequent elements problem
under both definitions, however
in this paper we focus on solving the $L_2$-frequent-elements problem,
as defined by Definition~\ref{def:L2heavyVar}.
In order to do so,
we use a variant of the \countsketch algorithm 
with specific parameters tailored for our problem
(See full details in Appendix~\ref{sec:varCharikar}).
This variant outputs a list of elements, and is guaranteed to output
every element with frequency at least $(1+\epsilon')\gamma L_2$
and no element of frequency less than $(1-\epsilon')\gamma L_2$,
for an input parameter~$\epsilon'$.

We stress that  \countsketch  
is not sufficient on its own
to prove Theorem~\ref{thm:main}. 
The main reason is that this algorithm works in the unbounded
stream model, rather than in the sliding window model. 
Another reason is that it must be tweaked in order not to output false positives. 
Our solution below makes a use of smooth-histograms to overcome these issues.

\subsection{Semi-smooth algorithm for frequent elements approximation}\label{sec:freqalg}
We construct
a smooth-histogram for the $L_2$ norm, and partition the stream into buckets accordingly.
It is known that the $L_2$ property is a
$(\epsilon,\frac{\epsilon^2}{2})$-smooth function~\cite{BO07}.
Using the method of Charikar et al.~\cite{CCF02}, separately on each bucket,
with a careful choice of parameters,
we are able to approximate
the $(\gamma,\epsilon)$-frequent elements problem on a sliding window (Fig.~\ref{alg:freq}).

\begin{figure}[!ht]
\normalsize
\begin{framed}
{\bf \freq$(\gamma,\epsilon,\delta)$}
\small
\begin{enumerate}
\item
Maintain an ($\frac\epsilon2,\frac\delta2$)-estimation of the $L_2$ norm of the window,
using a smooth-histogram.
\item \label{step:charikar}
For each bucket of the smooth-histogram, $A_1, A_2, \ldots$ maintain an
approximated list of the $k=\frac1{\gamma^2}+1$ most frequent elements,
by running $(\gamma, \frac\epsilon4,\frac\delta2)-${\charikar}.\\
(see \charikar's description in Appendix~\ref{sec:varCharikar}).
\item Let $\hat L_2$ be the approximated value of the $L_2$ norm of the current window $W$,
as given by the the smooth-histogram.
Let $q_1, \dotsc, q_k \in \{1, \dotsc, u\}$ be the list of the $k$ most heavy elements
in $A_1$, along
with $\hat n_1, \dotsc, \hat n_k$ their  estimated frequencies, as outputted by \charikar.
\item\label{step:filter} Output any element $q_i$ that satisfies
$\hat n_i > \frac1{1+\epsilon}\gamma \hat L_2$.
\end{enumerate}
\end{framed}
\vspace{-10pt}
\caption{A semi-smooth algorithm for the frequent elements problem}\label{alg:freq}
\vspace{-10pt}
\end{figure}

\begin{theorem}\label{thm:freq}
The semi-smooth algorithm \freq\ (Fig.~\ref{alg:freq}) is a
$(\gamma,O(\epsilon))$-approximation of the $L_2$-frequent elements problem,
with success probability at least $1-\delta$.
\end{theorem}

\begin{proof}
Recall that the smooth-histogram data structure of the $L_2$ guarantees us
an estimation $\hat L_2$ which is $(1\pm\epsilon)L_2(W)$; in addition there
exists some $\alpha$ such that $(1-\alpha)L_2(A_1) \le L_2(W) \le L_2(A_1)$.
In our
case the inequality is satisfied for $\alpha = \epsilon/2$
(see Theorem~3 and Definition~3 in~\cite{BO07}).
Any element $j$ with frequency $n_j(W) > (1+\epsilon)\gamma L_2(W)$
satisfies
\[
n_j(A_1) \ge n_j(W) \ge (1+\epsilon)\gamma L_2(W) \ge (1+\epsilon)(1-\epsilon/2)\gamma L_2(A_1)\ ,
\]
and will be added to the output list in Step~\ref{step:charikar}, since
Proposition~\ref{prop:allHigh} guarantees that
any element $i$ such that $n_i(A_i) > (1+\epsilon/4)\gamma L_2(A_1)$ is identified
by \charikar\ (assuming $\epsilon < \frac12$).

In order to show that all of the required elements survive  Step~\ref{step:filter}, we
use Lemma~\ref{lem:varCharikarFreqApprox} to bound the estimated frequency
$\hat n_i$ reported by \charikar, and show it is above the required threshold.
If $n_i(W) > (1+\epsilon)\gamma L_2(W)$ then
\[
\hat n_i(A) > n_i(A) - \frac\epsilon8\gamma L_2(A) > n_i(W) - \frac\epsilon{2-\epsilon}L_2(W) >
\left[1+\epsilon-\frac{\epsilon}{2-\epsilon} \right]\gamma L_2(W)\ ,
\]
recalling that
$\hat L_2< (1+\epsilon)L_2(W)$ implies that
the element survives Step~\ref{step:filter}.

While we are guaranteed that all the
$(1+\epsilon)\gamma L_2(W)$-frequent elements appear in the output list, it might
contain other elements which are not heavy enough. We now prove
that Step~\ref{step:filter} eliminates any element of frequency less than $(1-c\epsilon)\gamma L_2(W)$, for a constant $c$.

\begin{lemma}\label{lem:windowBucket}
If for an element $i$ there exists some $\zeta > \sqrt{\epsilon}$ such that $n_i(A_1) > \zeta L_2(A_1)$, then there exist a constant $\xi >0$ such that
$n_i(W) > \xi L_2(W)$.
\end{lemma}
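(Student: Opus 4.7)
The plan is to exploit the smooth-histogram guarantee $(1-\alpha)L_2(A_1)\le L_2(W)$ with $\alpha=\epsilon/2$ to show that the ``missing'' prefix of $A_1$ cannot contain too many occurrences of $i$. Concretely, write $A_1=P\cup W$, where $P$ is the (expired) prefix of $A_1$ preceding the window. For every $j\in[u]$ we have $n_j(A_1)=n_j(P)+n_j(W)$, so
\begin{equation*}
L_2(A_1)^2-L_2(W)^2=\sum_{j=1}^u\Bigl[(n_j(P)+n_j(W))^2-n_j(W)^2\Bigr]=\sum_{j=1}^u\Bigl[n_j(P)^2+2n_j(P)n_j(W)\Bigr].
\end{equation*}
Every summand on the right is nonnegative (frequencies are nonnegative integers), so dropping all terms $j\neq i$ and the cross term yields the key inequality $n_i(P)^2\le L_2(A_1)^2-L_2(W)^2$.

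Next I would plug in the smooth-histogram bound. Since $(1-\epsilon/2)L_2(A_1)\le L_2(W)\le L_2(A_1)$, squaring gives
\begin{equation*}
L_2(A_1)^2-L_2(W)^2\le\bigl(1-(1-\epsilon/2)^2\bigr)L_2(A_1)^2\le\epsilon\,L_2(A_1)^2,
\end{equation*}
hence $n_i(P)\le\sqrt{\epsilon}\,L_2(A_1)$. Combining with the hypothesis $n_i(A_1)>\zeta L_2(A_1)$ and the obvious monotonicity $L_2(W)\le L_2(A_1)$,
\begin{equation*}
n_i(W)=n_i(A_1)-n_i(P)>(\zeta-\sqrt{\epsilon})L_2(A_1)\ge(\zeta-\sqrt{\epsilon})L_2(W),
\end{equation*}
so one may take $\xi:=\zeta-\sqrt{\epsilon}$, which is strictly positive by the hypothesis $\zeta>\sqrt{\epsilon}$.

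The only subtle step is the very first one: expanding $L_2(A_1)^2-L_2(W)^2$ and observing that all summands are nonnegative so that $n_i(P)^2$ may be isolated on the left. Once that bookkeeping is done, the proof is essentially a one-line calculation combining the smooth-histogram inequality with the decomposition $n_i(W)=n_i(A_1)-n_i(P)$; there is no probabilistic content and no interaction with \charikar. In the sequel (the proof of Theorem~\ref{thm:freq}) this lemma is what lets us translate ``not heavy in $W$'' into ``not heavy in $A_1$'' via the contrapositive, so that the Step~\ref{step:filter} threshold applied to the $A_1$-estimates $\hat n_i$ rules out elements which are light in the window.
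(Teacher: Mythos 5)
Your proof is correct for the lemma as stated, but it takes a different route from the paper's and lands on a weaker constant. The paper stays in squared norms: from $L_2(W)^2 > (1-\epsilon/2)^2 L_2(A_1)^2 \ge (1-\epsilon)L_2(A_1)^2$ and $n_j(W)\le n_j(A_1)$ for every $j\ne i$, it cancels the other coordinates and obtains $n_i(W)^2 > n_i(A_1)^2 - \epsilon L_2(A_1)^2 > (\zeta^2-\epsilon)L_2(A_1)^2 \ge (\zeta^2-\epsilon)L_2(W)^2$, i.e.\ $\xi=\sqrt{\zeta^2-\epsilon}$. You instead decompose $A_1=P\cup W$, isolate $n_i(P)^2\le L_2(A_1)^2-L_2(W)^2\le\epsilon L_2(A_1)^2$, and subtract, getting $\xi=\zeta-\sqrt{\epsilon}$. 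Both are positive under the hypothesis $\zeta>\sqrt{\epsilon}$, so the lemma's literal conclusion holds either way; but the constants differ, and the paper's is strictly larger, since $\sqrt{\zeta^2-\epsilon} > \zeta-\sqrt{\epsilon}$ whenever $\zeta>\sqrt{\epsilon}$. This matters downstream: the proof of Theorem~\ref{thm:freq} invokes the lemma quantitatively, with $\zeta=(1-3\epsilon)\gamma$, to conclude $n_i(W)\ge\sqrt{1-7\epsilon}\,\gamma L_2(W)$, i.e.\ a $(1-c\epsilon)\gamma L_2(W)$ guarantee as required by Definition~\ref{def:L2heavyVar}; your $\xi=(1-3\epsilon)\gamma-\sqrt{\epsilon}$ only yields a $(1-O(\sqrt{\epsilon}/\gamma))\gamma L_2(W)$ bound, which would force a rescaling of $\epsilon$ (and hence extra memory) to recover the theorem. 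The sharper constant is already hiding in your own computation: in the identity $n_i(A_1)^2-n_i(W)^2=n_i(P)^2+2\,n_i(P)\,n_i(W)\le L_2(A_1)^2-L_2(W)^2\le\epsilon L_2(A_1)^2$, keep the cross term instead of discarding it; this gives $n_i(W)^2\ge n_i(A_1)^2-\epsilon L_2(A_1)^2$ directly and recovers the paper's $\xi=\sqrt{\zeta^2-\epsilon}$, avoiding the lossy step of taking the square root of $n_i(P)^2$ alone.
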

\begin{proof} By the properties of the smooth-histogram,
\begin{align*}
L_2(W)^2 &> (1-\epsilon/2)^2L_2(A_1)^2 > (1-\epsilon) L_2(A_1)^2\\
n_i(W)^2 + \sum_{j\ne i}n_j(W)^2 &>
      n_i(A_1)^2 + \sum_{j\ne i}n_j(A_1)^2  -\epsilon L_2(A_1)^2\\
n_i(W)^2 &>  n_i(A_1)^2 - \epsilon L_2(A_1)^2 > (\zeta^2-\epsilon)L_2(A_1)^2
\end{align*}
and $n_i(W) > \xi L_2(W)$ for $\xi \le \sqrt{(\zeta^2-\epsilon)}$.
\qed
\end{proof}

Suppose some element~$i$ survives Step~\ref{step:filter}, then
$\hat n_i(A_1) > \frac1{1+\epsilon}\gamma \hat L_2 > \frac{1-\epsilon}{1+\epsilon}\gamma L_2(W)$. By Lemma~\ref{lem:varCharikarFreqApprox},
\[
n_i(A_1) \ge \hat n_i(A_1) - \frac\epsilon8\gamma L_2(A_1)
\ge \left (\frac{(1-\epsilon)(1-\frac\epsilon2)}{1+\epsilon}-\frac\epsilon8 \right)\gamma L_2(A_1) > (1-3\epsilon)\gamma L_2(A_1),
\]
and by Lemma~\ref{lem:windowBucket},
$n_i(W) \ge  \sqrt{1-7\epsilon}\cdot\gamma L_2(W)$.
This proves
that for small enough $\epsilon$ there exists some constant~$c$ such that
 the algorithm doesn't output any element with frequency lower than
$(1-c\epsilon)\gamma L_2(W)$.

To conclude, except for probability $\delta/2$ we are able to partition the stream into
$L_2$-smooth buckets, and except for probability $\delta/2$, the \charikar\
algorithm outputs a list which
can be used to identify the frequent elements of the window.
Using a union bound we conclude that the entire
algorithm succeeds except with probability $\delta$. This completes the proof of the theorem.
\qed
\end{proof}

\smallskip\noindent\textbf{Memory Usage.}
The memory usage of the protocol is composed of two parts: maintaining
a $(\epsilon/2, \delta/2)$-smooth-histogram of $L_2$, and running \charikar\
on each of the buckets.
According to \cite{BO07} (corollary~5),
maintaining a smooth-histogram for $L_2$ can be done with memory
$$O\left(\tfrac1{\epsilon^2}\log^2 N +
\tfrac1{\epsilon^4}\log N\log \frac{\log N}{\delta\epsilon}\right)$$
for a relative error of $\epsilon/2+\epsilon^2/8$,
with success probability at least $1-\delta/2$.
For  small enough $\epsilon$ we have
 $\epsilon/2+\epsilon^2/8 < \epsilon$ as required.

As for the second part, recall that one instance of \charikar\  requires  a memory of
$O\big(\frac{1}{\epsilon^2\gamma^2}\log \frac{n}{\delta}\big)$ (see Appendix~\ref{sec:varCharikar}),
where $n$ is the size of the input. In our case the maximal size of the input is
the size of the first bucket, $\|A_1\|$. Note that $\log \|A_1\| = O(\log N)$
since $(1-\alpha)L_2(A_1) \le L_2(W) \le N$.
The number of \charikar\ instances is bounded by the number of buckets,
$O(\frac1{\epsilon^2}\log N)$~\cite{BO07}, which leads to a total memory bound of
\[
O \left (  \frac{1}{\gamma^2\epsilon^4}\log N\log \frac{N}{\delta} +
             \frac1{\epsilon^4}\log N\log\frac1{\epsilon}\right )\ .
\]

\subsection{Extensions to any~$L_p$ with~$p< 2$}
It is easy to see that the same method can be used in order to approximate $L_p$-heavy
elements for any $0<p<2$, up to a $1\pm\epsilon$ precision.
The algorithms and analysis remain the same, except for using a smooth-histogram
for the $L_p$ norm, and changing the parameters by constants.
\begin{theorem}
For any $p\in (0,2]$, there exists a sliding window algorithm that outputs all the elements with frequency
at least $(1+\epsilon)\gamma L_p$, and no element with frequency less then
$(1-\epsilon)\gamma L_p$.
The algorithm succeeds with probability at least $1-\delta$ 
 and takes $poly(\epsilon^{-1}, \gamma^{-1}, \log N,\log \delta^{-1})$ memory.
\end{theorem}

\section{Estimation of Non-Smooth Properties Relativized to the Number of Distinct Elements}
\label{sec:simrar}
In this section we extend the method shown above and apply it to other non-smooth functions.
In contrast to the smooth $L_2$ used above,
in this section we use a different smooth function to partition the stream, namely
the distinct elements count problem. This allows us to obtain efficient
semi-smooth approximations for the (non-smooth) \emph{similarity} and \emph{$\alpha$-rarity} tasks.

\subsection{Preliminaries} 
We now show that counting the number of distinct elements in a stream is smooth.
This allows us to partition the stream into a smooth-histogram structure,
where each two adjacent buckets have
approximately the same number of distinct elements.
\begin{proposition}
Define $\DEC(A)$ as the number of distinct elements in the stream $A$, i.e., $\DEC(A)=|A|$.
The function $\DEC$ is an $(\epsilon,\epsilon)$-smooth-function, for every $0 \le \epsilon \le 1$.
\end{proposition}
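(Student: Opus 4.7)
The plan is to verify each of the three smoothness conditions directly from the set-theoretic definition of $\DEC$. Write $D_X \subseteq [u]$ for the set of distinct elements appearing in a stream $X$, so $\DEC(X) = |D_X|$. Conditions (i) and (ii) are immediate: $|D_A| \ge 0$, and if $B \subseteq_r A$ then every element of $B$ appears in $A$, hence $D_B \subseteq D_A$ and $\DEC(B) \le \DEC(A)$.

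The substantive step is condition (iii) with $\alpha = \beta = \epsilon$. My approach is to establish the following key inequality for any $B \subseteq_r A$ and any adjacent stream $C$:
\begin{equation*}
\DEC(A \cup C) - \DEC(B \cup C) \;\le\; \DEC(A) - \DEC(B).
\end{equation*}
To see this, observe that $D_{A \cup C} = D_A \cup D_C$ and $D_{B \cup C} = D_B \cup D_C$, and use inclusion--exclusion:
\begin{equation*}
|D_A \cup D_C| - |D_B \cup D_C| = \bigl(|D_A| - |D_B|\bigr) - \bigl(|D_A \cap D_C| - |D_B \cap D_C|\bigr).
\end{equation*}
Since $D_B \subseteq D_A$, the bracketed difference on the right is nonnegative, which yields the claimed inequality. (Intuitively, the ``new'' distinct elements that $A$ contains beyond $B$ may or may not already appear in $C$; whichever ones do, only reduce the gap further.)

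With this in hand, assume $(1-\epsilon)\DEC(A) \le \DEC(B)$, i.e.\ $\DEC(A) - \DEC(B) \le \epsilon\,\DEC(A)$. Combining with the key inequality and the monotonicity $\DEC(A) \le \DEC(A \cup C)$ gives
\begin{equation*}
\DEC(A \cup C) - \DEC(B \cup C) \;\le\; \DEC(A) - \DEC(B) \;\le\; \epsilon\,\DEC(A) \;\le\; \epsilon\,\DEC(A \cup C),
\end{equation*}
which rearranges to $(1-\epsilon)\DEC(A \cup C) \le \DEC(B \cup C)$, verifying (iii).

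There is no real obstacle here; the argument is a short combinatorial identity. The only place one has to be careful is in justifying $D_B \subseteq D_A$ from $B \subseteq_r A$ (which holds because $B$ is literally a suffix substream of $A$, not just a subset of occurrences) and in noting that $\alpha = \beta = \epsilon$ is allowed by the definition since $0 < \epsilon \le \epsilon < 1$ for the nontrivial range $0 < \epsilon < 1$ (the boundary case $\epsilon = 0$ is degenerate and $\epsilon = 1$ holds trivially from monotonicity).
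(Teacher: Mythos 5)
Your proof is correct and follows essentially the same route as the paper: the paper's chain $(1-\epsilon)\DEC(A\cup C)=(1-\epsilon)[\DEC(A)+\DEC(C\smallsetminus A)]\le \DEC(B)+\DEC(C\smallsetminus B)=\DEC(B\cup C)$ encodes exactly your key inequality $\DEC(A\cup C)-\DEC(B\cup C)\le \DEC(A)-\DEC(B)$, which you derive by inclusion--exclusion and then combine with monotonicity. The only cosmetic difference is that you pass through the difference bound plus $\DEC(A)\le\DEC(A\cup C)$, while the paper manipulates $(1-\epsilon)\DEC(A\cup C)$ directly; your remark on the boundary values of $\epsilon$ is a harmless addition the paper omits.
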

\begin{proof}
Properties (i) and (ii) of Definition~\ref{smooth} follow directly from $\DEC$'s definition. As for property (iii),
assume that $B\subseteq_r A$ and ${(1-\epsilon)}\DEC(A) \le \DEC(B)$, then
\begin{eqnarray*}
{(1-\epsilon)}\DEC(A\cup C)&=&(1-\epsilon)\left[\DEC(A)+\DEC(C\setminus A) \right ]\\
&\le & \DEC(B) + (1-\epsilon)\DEC(C\setminus A) \\
&\le & \DEC(B) + \DEC(C\setminus B) \\   
&=& \DEC(B\cup C),
\end{eqnarray*}
where ``$A\setminus B$" represents the set of all the elements in $A$ which are not in $B$.
\qed
\end{proof}

There have been many works on counting distinct elements in streams,
initiated by Flajolet and Martin~\cite{FM83}, and later improved
by many others~\cite{AMS99,GT01,BKS02,BJKST02}.
Recently, Kane, Nelson and Woodruff provided an optimal algorithm
for $(\epsilon,\delta$)-approximating the number of distinct elements~\cite{KNW10},
using 
$O((\frac1{\epsilon^2}+\log u)\log\frac1\delta)$ bits and $O(1)$ time.
We use the method of Kane et al.\ in order to construct a smooth-histogram
for the distinct elements count with memory
$\tilde O\big((\log u+\frac{1}{\epsilon^2})\frac{1}{\epsilon}
\log N\log\frac1\delta+ \frac{1}{\epsilon}\log^2 N\big)$, suppressing
$\log\log N$ and $\log\frac1\epsilon$ terms.

Another tool we use is  {\em min-wise hash functions}~\cite{Broder97,BCFM00}, used in various algorithms in order to estimate different characteristics of data streams, especially the {\em similarity} of two streams~\cite{Broder97}.
Informally speaking,
these functions have a meaning of uniformly sampling an element from the stream, which makes them
a very useful tool.
\begin{definition}[min-hash]\label{min-hash} 
Let $\Pi = \{ \pi_i \}$ be a  family of permutations over $[u]=\{1,\ldots, u\}$. For a subset $A\subseteq [u]$ define $h_i$ to be the minimal permuted value of $\pi_i$ over $A$,
\(
h_i = \min_{a \in A} \pi_i(a).
\)
A family $\{ h_i \}$ of such functions is called exact min-wise independent hash functions (or min-hash)  if for any subset $A\subseteq [u]$ and $a\in A$,
\[ 
\Pr_i\left[h_i(A) = \pi_i(a)\right] = \frac{1}{|A|}. 
\]
The family $\{ h_i \}$ is called $\epsilon$-approximated min-wise independent hash functions (or $\epsilon$-min-hash) if for any subset $A\subseteq [u]$ and $a\in A$,
\[
\Pr_i\left[h_i(A) = \pi_i(a)\right] \in \frac{1}{|A|}(1\pm\epsilon).
\]
\end{definition}
A specific construction of $\epsilon$-min-hash functions was presented
by Indyk~\cite{Indyk99}, using only $O(\log\frac{1}{\epsilon}\log u)$ bits.
The time per hash calculation is bounded by $O(\log \frac{1}{\epsilon})$.
Min-hash functions can be used in order to estimate the similarity
of two sets, by using the following lemma,
\begin{lemma}{\rm (\cite{BCFM00}.~See also \cite{DM02}.)}\label{probAW}
For any two sets $A$ and $W$ and an $\epsilon'$-min-hash function~$h_i$, it holds that
\(
\Pr_i\left[h_i(A)=h_i(W)\right] = \frac{|A\cap W|}{|A\cup W|} \pm \epsilon'.
\)
\end{lemma}

\subsection{A semi-smooth estimation of $\alpha$-rarity}\label{sec:rarity}

In the following section we present an algorithm that estimates the $\alpha$-{\em rarity}
of a stream (in the sliding window model), i.e., the ratio of elements
that appear exactly $\alpha$ times in the window.
The rarity property  is known not to be smooth, yet
by using a smooth-histogram for distinct elements count, we are able to
partition the stream into $O(\frac1\epsilon \log N)$ buckets, and estimate the $\alpha$-rarity
in each bucket.

\begin{definition}
An element $x$ is {\em $\alpha$-rare} if it appears exactly $\alpha$ times in the stream.
The $\alpha$-rarity measure, $\rho_\alpha$, denotes the ratio of $\alpha$-rare elements in the entire stream~$S$, i.e.,
\(
\rho_\alpha = \frac{\left |\{x\  \vert\  x \mbox{ is } \alpha\mbox{-rare in } S\}\right |}{\DEC(S)}\ .
\)
\end{definition}

Our algorithm follows the method used by~\cite{DM02} to estimate
$\alpha$-rarity in the unbounded model.
The estimation is based on the fact that the $\alpha$-rarity is equal to the portion of min-hash functions that their min-value appears exactly $\alpha$ times in the stream.

However, in order to estimate rarity over sliding windows,
one needs to estimate the ratio of min-hash functions
of which the min-value appears exactly $\alpha$ times {\em within the window}.
Our algorithm builds a smooth-histogram for $\DEC$ in order
to partition the stream into buckets,
such that each two consecutive buckets  have approximately the same
number of distinct elements.
In addition,
we sample the bucket using a min-wise hash, and count
the  $\alpha+1$ last occurrences of the sampled element $x_i$ in the bucket.
We estimate the  $\alpha$-rarity of the window by
calculating the fraction of min-hash functions
of which the appropriate min-value $x_i$ appears exactly $\alpha$ times {\em within the window}.
Due to feasibility reasons we use approximated min-wise hashes,
and prove that this estimation is an $\epsilon$-approximation
of the $\alpha$-rarity of the current window
(up to a pre-specified additive precision).
The {\em semi-smooth} algorithm \rarity\ for $\alpha$-rarity is defined in Fig.~\ref{alg:rarity}.
\begin{figure}[htb]
\begin{framed}
\rarity$(\epsilon, \delta)$
\small
\begin{enumerate}
\item Randomly choose $k$ $\frac\epsilon2$-min-hash functions $h_1$, $h_2$, $\ldots$, $h_k$.
\item Maintain an $(\epsilon,\frac\delta2)$-estimation
 of the number of distinct elements by building a smooth histogram.
\item For every bucket instance $A_j$ of the smooth-histogram and for each one of the hash functions
$h_i$, $i\in[k]$
  \begin{enumerate}
  \item maintain the  value of the min-hash function $h_i$ over the bucket, $h_i(A_j)$
  \item maintain a list $L_i(A_j)$ of the most recent $\alpha+1$ occurrences of $h_i(A_j)$ in $A_j$
  \item whenever the value $h_i(A_j)$ changes, re-initialize the list $L_i(A_j)$, and continue maintaining
  the occurrences of the new value $h_i(A_j)$.
  \end{enumerate}
\item Output $\hat \rho_\alpha$, the ratio of the min-hash functions $h_i$, which has exactly $\alpha$ {\em active} elements in  $L_i(A_1)$, i.e.\ the ratio
\[
\hat\rho_\alpha = |\{i \text{ s.t. } L_i(A_1) \text{ consists exactly }\alpha \text{ active elements}\}| /k\ .
\]
\end{enumerate}
\vspace{-10pt}
\end{framed}
\vspace{-10pt}
\caption{Semi-smooth algorithm for $\alpha$-rarity}\label{alg:rarity}
\vspace{-10pt}
\end{figure}

The \rarity\ algorithm provides an $(\epsilon,\delta)$-approximation for the $\alpha$-rarity problem, up to an additive error of $\epsilon$.
As proven by Datar et al.~\cite{DM02}, the ratio of min-hash functions that have exactly $\alpha$ active elements in the window is an estimation of $\rho_\alpha$. This is true even when using the
min-value of the inclusive bucket $A_1$ rather than the min-value of the current windows $W$.
\begin{theorem}\label{thm:rarity}
The semi-smooth algorithm (Fig.~\ref{alg:rarity}) is an
$(\epsilon,\delta)$-approximation for the $\alpha$-rarity problem, up to an additive precision.
\end{theorem}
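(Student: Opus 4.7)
The plan is to unfold the definition of $\hat\rho_\alpha$ as an average of $k$ indicator variables, show that each indicator has expectation close to $\rho_\alpha(W)$ using the approximate min-hash property together with the smoothness of $\DEC$, and then apply a Chernoff bound over the $k$ independent hashes.

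First I would justify that, with probability at least $1-\delta/2$, the smooth-histogram for $\DEC$ delivers buckets with $(1-\epsilon)|A_1|\le|W|\le|A_1|$, so that $|W|/|A_1|=1\pm\epsilon$. Next, I would analyze the indicator $X_i$ which is $1$ iff $L_i(A_1)$ contains exactly $\alpha$ active elements. Let $x$ be the preimage of $h_i(A_1)$ in $A_1$; since the algorithm stores the $\alpha+1$ most recent occurrences of $x$ inside $A_1$ and $W\subseteq_r A_1$, a short case analysis (on whether $x$ appears at least $\alpha+1$ times in $A_1$ or fewer) shows that the count of active elements in $L_i(A_1)$ equals $\min(\alpha+1,n_x(W))$. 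Hence $X_i=1$ exactly when $n_x(W)=\alpha$, i.e.\ when the min-hash preimage of $h_i$ over $A_1$ is $\alpha$-rare in $W$.

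I would then compute $\mathbb{E}[X_i]$ by summing over the elements $x\in A_1$ that are $\alpha$-rare in $W$ (any such $x$ must lie in $A_1$ because it lies in $W\subseteq_r A_1$). Using the $\epsilon/2$-min-hash guarantee and Lemma on $h_i$,
\begin{equation*}
\mathbb{E}[X_i]=\sum_{x\in A_1:\,n_x(W)=\alpha}\Pr[h_i(A_1)=\pi_i(x)]=\frac{R_\alpha(W)}{|A_1|}\bigl(1\pm\tfrac{\epsilon}{2}\bigr),
\end{equation*}
where $R_\alpha(W)$ is the number of $\alpha$-rare elements in $W$, so that $R_\alpha(W)=\rho_\alpha(W)\cdot|W|$. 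Combining with $|W|/|A_1|=1\pm\epsilon$ yields $\mathbb{E}[X_i]=\rho_\alpha(W)(1\pm c\epsilon)$ for a small constant $c$; since $\rho_\alpha(W)\le 1$, this is an additive $O(\epsilon)$ deviation from $\rho_\alpha(W)$.

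Finally, $\hat\rho_\alpha=\frac{1}{k}\sum_{i=1}^k X_i$ is an average of $k$ independent $\{0,1\}$ random variables (the hashes are chosen independently), so a Chernoff/Hoeffding bound with $k=\Theta(\epsilon^{-2}\log(1/\delta))$ gives $|\hat\rho_\alpha-\mathbb{E}[X_i]|\le\epsilon$ with probability at least $1-\delta/2$. A union bound with the $\delta/2$ failure probability of the smooth-histogram for $\DEC$ finishes the argument. The main technical subtlety, and the step I would be most careful about, is the case analysis showing that tracking only the last $\alpha+1$ occurrences of $h_i(A_1)$ in the (possibly much larger) bucket $A_1$ is sufficient to decide exact $\alpha$-rarity inside the strictly smaller window $W$; everything else is a routine combination of the smoothness of $\DEC$, the approximate min-hash identity of Lemma on $h_i$, and standard concentration.
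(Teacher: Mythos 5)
Your proposal is correct and follows essentially the same route as the paper's proof: you establish that the indicator event (exactly $\alpha$ active entries in $L_i(A_1)$) coincides with the min-hash preimage over $A_1$ being $\alpha$-rare in $W$, apply the approximate min-hash property to get the expectation $\approx |R_\alpha|/|A_1|$, use the $\DEC$ smooth-histogram guarantee $(1-\epsilon)|A_1|\le |W|\le |A_1|$, and finish with Chernoff over $k=\Theta(\epsilon^{-2}\log(1/\delta))$ independent hashes plus a union bound. The only cosmetic difference is that you sum the $\epsilon/2$-min-hash definition directly over the $\alpha$-rare elements, whereas the paper invokes its Jaccard-similarity lemma applied to the pair $(R_\alpha,A_1)$; both yield the same $O(\epsilon)$ deviation.
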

\begin{proof}
For the sake of simplicity we treat the multisets $A_1$, $W$, etc., as sets.
Let $R_\alpha$ be the set of elements which are $\alpha$-rare in the window $W$.
Following Lemma~\ref{probAW}, with $R_\alpha \subseteq A_1$,
\[\Pr [h_i(A_1)=h_i(R_\alpha)] = \frac{|R_\alpha \cap A_1|}{|R_\alpha\cup A_1|}\pm\frac\epsilon2
=\frac{|R_\alpha|}{|A_1|}\pm\frac\epsilon2.
\]

The algorithm outputs an approximation of $\Pr\big[L_i(A_1)$
consists of exactly $\alpha$ active elements$\big]$, which
equals to $\Pr[h_i(A_1)=h_i(R_\alpha)]$, since $h_i(A_1)=h_i(R_\alpha)$ if and only if
$L_i(A_1)$ consists of $\alpha$~active elements.
Let~$x_i$ be the element which minimizes~$h_i$ on~$A_1$,  $h(x_i) = h(A_1)$.
If the number of active elements in $L_i(A_1)$ is not $\alpha$,
then $x_i \not\in R_\alpha$, thus $h(A_1) \ne h(R_\alpha)$.
For the other direction,
if $h_i(A_1) = h_i(R_\alpha)$
then $L_1(A_1)$ counts the number of occurrences of $x_i$ in the bucket,
and since~$x_i \in R_\alpha$, it appears exactly $\alpha$~times within the window.

We build a smooth-histogram for~$\DEC$ by using the algorithm of
Kane et al.~\cite{KNW10} as an approximation of~$\DEC$ for the unbounded model
(see Theorem~3 in~\cite{BO07}).
The smooth-histogram guarantees\footnote{Actually, it guarantees even a better bound,
specifically, $(1-\tfrac\epsilon2)|A_1| \le |W| \le |A_1|$.} that ${(1-\epsilon)|A_1| \le |W| \le |A_1|}$, thus
\[
\frac{|R_\alpha|}{|A_1|} \le \frac{|R_\alpha|}{|W|} = \rho_\alpha \qquad , \qquad
\frac{|R_\alpha|}{|A_1|} \ge 
(1-\epsilon)\frac{|R_\alpha|}{|W|} \ge (1-\epsilon)\rho_\alpha \ .
\]

Therefore, estimating the ratio~$\rho_\alpha$ using $k$~hash functions
results with a value $(1\pm\epsilon)\rho_\alpha\pm\frac\epsilon2$
up to some additive error $\epsilon'$ determined by~$k$.
Finally, using Chernoff's inequality we can bound the
additive error so that $\epsilon' < \frac\epsilon2$, except for probability~$\frac\delta2$.
In order to achieve the desired precision we require
$k = \Omega (\frac1{\epsilon^{2}}\log\frac1\delta)$,
and the estimation satisfies
\begin{eqnarray*}
\hat\rho_\alpha &\in& (1\pm\epsilon)\rho_\alpha\pm\epsilon\ ,
\end{eqnarray*}
except for probability at most $\delta$.\enlargethispage{2ex}
This concludes the correctness of the algorithm.
\qed
\end{proof}

\smallskip\noindent\textbf{Memory Usage.}
The memory consumption of the  \rarity\ algorithm is as follows.
Maintaining a smooth histogram for $\DEC$
is done using the method of Kane et al.~\cite{KNW10} as the underlying algorithm for DEC in the unbounded model, with memory
$\tilde O\big( (\log u+\frac{1}{\epsilon^2})\frac{1}{\epsilon}\log N\log\frac1\delta+\frac{1}{\epsilon}\log^2N\big)$;
$k$ seeds for the $\frac\epsilon2$-min-hash functions: $O(k \log\frac1\epsilon\log u)$;
 {Saving a list $L_i$ and a value $h_i$ for each bucket $A_j$ and for $i\in[k]$:}
$O([\log u+\alpha\log N]\frac{k}{\epsilon}\log N )$.

We note that this improves the {\em expected} memory bound of
Datar et al.~\cite{DM02} into a
{\em worst case} bound of the same magnitude (up to a $\log\log N$ term).
In most of the practical cases $\log u $ and $\log N $ are very close, and we can assume that
$\log u = O(\log N)$. In that case, the space complexity is
$\tilde O\left(\frac{k}{\epsilon}\alpha\log^2N\right)$ bits, with $k=\Omega(\frac1{\epsilon^{2}}\log\frac1\delta)$, and the
 time complexity is
$\tilde O\left(\frac{k\alpha}{\epsilon}\log N\right)$
calculations per element, suppressing $poly(\log\frac{1}{\epsilon}$, $\log\log N)$ terms.

\subsection{A semi-smooth estimation of streams similarity}\label{sec:similarity}
In this section we present an algorithm for calculating the {\em similarity}
of two streams $X$ and $Y$. As in the case of the rarity,
the similarity property is known not to be smooth, however
we are able to design a semi-smooth algorithm that estimates it.
We maintain a smooth-histogram of the distinct elements count
in order to partition each of the streams, and
sample each bucket of this partition
using a min-hash function. We compare the ratio of sample agreements in order to estimate the similarity of the two streams.

\begin{definition}
The (Jaccard) {\em similarity} of two streams, $X$ and $Y$ is given by
\(
S(X,Y)=\frac{|X\cap Y|}{|X \cup Y|}\ .
\)
\end{definition}
Recall that for two streams $X$ and $Y$, a reasonable estimation of $S(X,Y)$
is given by the number of min-hash values they agree on~\cite{DM02}. In other words,
let $h_1, h_2 , \dotsc, h_k$ be a family of $\epsilon$-min hash functions and let
$$\hat S(X,Y) = \left | \left \{ i \in [k] \mbox{ s.t. } h_i(X)=h_i(Y) \right \} \right | / k \; ,$$ then
$\hat S(X,Y) \in (1\pm\epsilon)S(X,Y) + \epsilon (1+p)$,  with success probability at least $1-\delta$,
where $p$ and $\delta$ are determined by $k$.
Based on this fact, Datar et al.~\cite{DM02} showed an algorithm for estimating similarity in the sliding window model,
that uses expected memory of $O(k(\log\frac{1}{\epsilon}+\log N))$ words with
$k=\Omega(\frac1{\epsilon^{3}p}\log{\frac{1}\delta})$.
%
Using smooth-histograms, our algorithm reduces the expected memory bound into a worst-case bound. The semi-smooth algorithm \similarity\ is rather straightforward and is given in Fig.~\ref{alg:similarity}.

\begin{figure}[htb]
\begin{framed}
{\bf \similarity$(\epsilon,\delta)$}
\small
\begin{enumerate}
\item Randomly choose $k$ $\epsilon'$-min-hash functions, $h_1, \ldots, h_k$. The constant $\epsilon'$ will be specified later, as a function of the desired precision $\epsilon$.
\item For each stream ($X$ and $Y$)
maintain an $(\epsilon',\frac\delta2)$-estimation
of the number of distinct elements by building a smooth histogram.
\item For each stream and for each bucket instance $A_1, A_2, \dotsc,$
separately calculate the values of each of the min-hash functions $h_i$, $i=1\ldots k$.
\item Let $A_X$ ($A_Y$) be the first smooth-histogram bucket that includes the current window
$W_X$ ($W_Y$) of the stream $X$ ($Y$). Output the ratio of hash-functions $h_i$ which agree on the minimal value, i.e.,
$$\hat\sigma(W_X,W_Y) =  \left |\left\{ i \in [k] \mbox{ s.t. } h_i(A_X)=h_i(A_Y)\right \} \right | / k\ .$$
\end{enumerate}
\vspace{-10pt}
\end{framed}
\vspace{-10pt}
\caption{A semi-smooth algorithm for estimating similarity}\label{alg:similarity}
\vspace{-10pt}
\end{figure}

\begin{theorem}\label{thm:similarity}
The semi-smooth algorithm for estimating similarity (Fig.~\ref{alg:similarity}), is an $(\epsilon,\delta)$-approximation for the similarity problem, up to an additive precision.
\end{theorem}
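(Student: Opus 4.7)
The plan is to chain three estimations, each contributing an additive error of order $\epsilon'$, and then choose $\epsilon'$ as a small constant multiple of $\epsilon$. Given that $\hat\sigma(W_X,W_Y)$ is the average, over $k$ independent $\epsilon'$-min-hashes $h_i$, of the indicator $\mathbf{1}[h_i(A_X)=h_i(A_Y)]$, I split the analysis into: (i) controlling the sampling error of $\hat\sigma(W_X,W_Y)$ around its expectation, (ii) relating that expectation to the Jaccard similarity $S(A_X,A_Y)$ of the two first buckets via Lemma~\ref{probAW}, and (iii) showing that $S(A_X,A_Y)$ is close to $S(W_X,W_Y)$ thanks to the smooth-histogram guarantees on DEC.

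For step (i), a Hoeffding bound on the $k$ independent $\{0,1\}$ indicators yields $|\hat\sigma(W_X,W_Y)-\mathbb{E}[\hat\sigma(W_X,W_Y)]|\le\epsilon'$ with probability at least $1-\delta/2$, provided $k=\Omega(\epsilon'^{-2}\log(1/\delta))$. For step (ii), Lemma~\ref{probAW} applied to the (distinct-element) sets underlying $A_X$ and $A_Y$ gives $\mathbb{E}[\hat\sigma(W_X,W_Y)]=S(A_X,A_Y)\pm\epsilon'$, since each $h_i$ is an $\epsilon'$-min-hash.

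Step (iii) is the main obstacle and the place where the semi-smooth idea really earns its keep. Let $X',Y'$ denote the sets of distinct elements of $A_X,A_Y$ and let $w_X,w_Y$ denote the distinct elements of $W_X,W_Y$; then $w_X\subseteq X'$ and $w_Y\subseteq Y'$. Because the smooth histogram for DEC guarantees $(1-\epsilon')\DEC(A_X)\le \DEC(W_X)\le\DEC(A_X)$ and analogously for $Y$, we obtain $|X'\setminus w_X|\le \epsilon'|X'|$ and $|Y'\setminus w_Y|\le \epsilon'|Y'|$, both bounded by $\epsilon'|X'\cup Y'|$. A direct calculation from
\[
(X'\cap Y')\setminus (w_X\cap w_Y)\subseteq (X'\setminus w_X)\cup(Y'\setminus w_Y),\qquad (X'\cup Y')\setminus(w_X\cup w_Y)\subseteq (X'\setminus w_X)\cup(Y'\setminus w_Y),
\]
shows that numerator and denominator of the Jaccard ratio both shift by at most $2\epsilon'|X'\cup Y'|$, and a standard manipulation of the ratio $a/b$ under such additive perturbations gives $|S(A_X,A_Y)-S(W_X,W_Y)|=O(\epsilon')$.

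Combining the three steps via the triangle inequality, and taking a union bound to absorb the $\delta/2$ failure of each of the two smooth histograms (for $X$ and $Y$) as well as the Hoeffding failure, yields $|\hat\sigma(W_X,W_Y)-S(W_X,W_Y)|=O(\epsilon')$ with probability at least $1-\delta$. Finally, setting $\epsilon'=\epsilon/c$ for a suitable absolute constant $c$ produces the desired additive $\epsilon$-approximation, completing the proof. The analysis of step (iii) is really the only non-routine piece; steps (i) and (ii) are standard applications of concentration and of Lemma~\ref{probAW}, mirroring the structure of the rarity proof in Appendix~\ref{sec:thmRarity}.
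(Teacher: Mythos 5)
Your proposal is correct and follows essentially the same route as the paper: min-hash agreement plus Lemma~\ref{probAW} to estimate the Jaccard similarity of the buckets $A_X,A_Y$, a Chernoff/Hoeffding bound for the sampling error with $k=\Omega(\epsilon^{-2}\log\frac1\delta)$, and the DEC smooth-histogram guarantee $(1-\epsilon')|A_X|\le|W_X|\le|A_X|$ to transfer the estimate from buckets to windows. The only (cosmetic) difference is in the bucket-versus-window step: you bound the shifts of numerator and denominator via set containments and a ratio-perturbation argument, whereas the paper uses the identity $\frac{|S\cap Q|}{|S\cup Q|}=\frac{|S|+|Q|}{|S\cup Q|}-1$ together with $|W_X\cup W_Y|\le|A_X\cup A_Y|\le\frac{1+\epsilon'}{1-\epsilon'}|W_X\cup W_Y|$; both yield the same $O(\epsilon')$ discrepancy.
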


\begin{proof}
Following Lemma~\ref{probAW}, 
\[
Pr [h_i(A_X)=h_i(A_Y)]= \frac{|A_X\cap A_Y|}{|A_X \cup A_Y|}\pm \epsilon'\mbox{.}
\]

For convenience, once again we treat
 buckets $A_X,A_Y,W_X,W_Y$ as sets.
Notice that we can write $A_X = W_X \cup (A_X \setminus W_X)$ and that
$0 \le |A_X\setminus W_X|\le \frac{\epsilon'}{1-\epsilon'}|W_X|$, which follows from
the guarantee of the smooth-histogram that
 $(1-\epsilon')|A_X| \le |W_X| \le |A_X|$ (and same for $A_Y$ and $W_Y$).
Using elementary set operations, we can estimate $|W_X\cup W_Y|$ using $|A_X\cup A_Y|$,
\begin{eqnarray*}
|W_X\cup W_Y| \le &|A_X \cup A_Y| & \le |W_X\cup W_Y| + \frac{\epsilon'}{1-\epsilon'} |W_X| +\frac{\epsilon'}{1-\epsilon'} |W_Y| \\
&& \le |W_X\cup W_Y| + 2 \frac{\epsilon'}{1-\epsilon'}|W_X\cup W_Y| \\
&&=
\frac{1+\epsilon'}{1-\epsilon'}|W_X\cup W_Y| \ .
\end{eqnarray*}
In addition, any two sets $S,Q$ always satisfy $\frac{|S\cap Q|}{|S\cup Q|} = \frac{|S|+|Q|}{|S\cup Q|}-1$, thus
the similarity estimation satisfies
\begin{eqnarray*}
 \frac{|A_X\cap A_Y|}{|A_X \cup A_Y|} &=& \frac{|A_X| + |A_Y|}{|A_X \cup A_Y|}-1
  \le 
 \frac{\frac{1}{1-\epsilon'}|W_X| + \frac{1}{1-\epsilon'}|W_Y|}{|W_X \cup W_Y|}-1
 = \frac{1}{1-\epsilon'}\frac{|W_X\cap W_Y|}{|W_X\cup W_Y|} + \frac{\epsilon'}{1-\epsilon'},\quad \text{ and }\\[2ex]
\frac{|A_X\cap A_Y|}{|A_X \cup A_Y|} &\ge&
\frac{|W_X\cap W_Y|}{\frac{1+\epsilon'}{1-\epsilon'}{|W_X\cup W_Y|}}
 = \frac{1-\epsilon'}{1+\epsilon'}\frac{|W_X\cap W_Y|}{|W_X\cup W_Y|},
\end{eqnarray*}
Finally, setting $\epsilon' \le \epsilon/2$ gives an estimation 
$\hat \sigma(W_X,W_Y) \in (1\pm \epsilon)S(W_X,W_Y) \pm \epsilon$, 
up to an additional
additive error, which can be arbitrarily decreased using Chernoff's bound, by increasing $k$.
Specifically, this additional error is bounded by $O(\epsilon)$ when $k=\Omega(\frac1{\epsilon^{2}}\log\frac1\delta)$, with success probability at least ${1-O(\delta)}$.
\qed
\end{proof}

\smallskip\noindent\textbf{Memory Usage.}
Let us summarize the memory consumption of the \similarity\ algorithm.
{ Maintaining a smooth histogram for $\DEC$:} $\tilde O\big( (\log u+\frac{1}{\epsilon^2})\frac{1}{\epsilon}\log N \log\frac1\delta+\frac{1}{\epsilon}\log^2 N\big)$;
{$k$ seeds for $\epsilon/2$-min-hash functions:} $O(k \log\frac1\epsilon\log u)$;
{Keeping the hash value for each $h_i$:} $O\big(k\frac{1}{\epsilon}\log N\log u\big)$.

Our algorithm improves the currently known
{\em expected} bound~\cite{DM02} into a {\em worst case} bound of the same magnitude (up to a $\log\log N$ term).
Taking $k=\Omega(\frac1{\epsilon^{2}}\log\frac1\delta)$ and assuming $\log u =O (\log N)$, we achieve a memory bound of
$\tilde O\big(k\frac{1}{\epsilon}\log^2 N\big)$,
with $\tilde O(k\frac1\epsilon\log N)$ calculations per element,
suppressing $poly(\log \frac1\epsilon, \log\log N)$ elements.

\section{Conclusions}\label{sec:conclusions}
We have shown the first polylogarithmic algorithm for
identifying $L_2$ heavy-hitters up to $1\pm\epsilon$ precision, over sliding windows.
Our result supplies another insight about the relations between
the unbounded and sliding window models, for the central question of heavy-hitters.
As the $L_p$-heavy-hitters problem is more difficult for  larger $p$, and for $p>2$
there cannot exist a polylogarithmic solution, our algorithm provides a small-memory solution
for the ``strongest'' $L_p$~norm.

Although our main concern was the $L_2$ norm,
the algorithm can easily be extended for any $L_p$ with $0<p\le 2$.
Moreover, a polylogarithmic approximation of the top-$k$ problem in sliding window
is immediate using our methods.

The tools shown in this paper can be applied to many
other properties, if there exists a smooth function
which is correlated to the target function.
We have shown how to employ the same techniques in order to obtain
an efficient sliding window algorithm for the similarity and $\alpha$-rarity problems, 
with essentially the same memory consumption as the current state of the art,
however, our bound applies for the \emph{worst case} rather than holds only in expectation.
We believe that our method can be used to 
improve the memory efficiency 
of many other sliding-window algorithms
for non-smooth properties.

\section*{Acknowledgments}
V.B. is supported in part by DARPA grant N660001-1-2-4014.
R.O. is supported in part by NSF grants CNS-0830803; CCF-0916574; IIS-1065276; CCF-1016540; CNS-1118126; CNS-1136174; US-Israel BSF grant 2008411, OKAWA Foundation Research Award, IBM Faculty Research Award, Xerox Faculty Research Award, B. John Garrick Foundation Award, Teradata Research Award, and Lockheed-Martin Corporation Research Award. This material is also based upon work supported by the Defense Advanced Research Projects Agency through the U.S. Office of Naval Research under Contract N00014-11-1-0392. The views expressed are those of the author and do not reflect the official policy or position of the Department of Defense or the U.S. Government.




\appendix
\section*{Appendix}

\section{The \charikar\ Algorithm}\label{sec:varCharikar}
In this section we describe the \charikar\ algorithm 
and prove several of its properties.
Let us sketch the details of the {\sc CountSketch} algorithm as defined in~\cite{CCF02}.
{\sc CountSketch} is defined by three parameters $(t,b,k)$ such that the algorithm takes
space $O(tb+k)$, and if $t=O(\log {\frac{n}\delta})$ and $b\ge \max(8k, 256\frac{L_2}{\epsilon^2\phi_k^2})$
then the algorithm outputs any element with frequency at least $(1+\epsilon)\phi_k$,
except with probability $\delta$. 
$\phi_k$ is the frequency of the $k$th-heavy element, 
and $L_2$ is the $L_2$-frequency norm of the entire ($n$-element) stream. 
The algorithm works by computing, for each element $i$, an approximation $\hat n_i$ of its frequency. 
The scheme guarantees that with high probability, for every element $i$, $|\hat n_i - n_i| < 8\frac{L_2(S)}{\sqrt{b}}$ 
(see Lemma 4 in~\cite{CCF02}).

For
$0 < {\epsilon',\gamma,\delta} \le1$ 
define $(\gamma,\epsilon',\delta)$-{\sc CountSketch}$_b$ as
the algorithm {\sc CountSketch},
setting $k=\frac{1}{\gamma^2}+1$ and
letting $b = \frac{256}{\gamma^2\epsilon'^2}$ (the parameter $t$ remains as in the original scheme).
The choice of $k$ follows from the following known fact.
\begin{lemma}\label{lem:notallheavy}
There are at most $\frac1{\gamma^2}$ elements with frequency higher than $\gamma L_2$.
\end{lemma}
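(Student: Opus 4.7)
The plan is to prove this by a direct counting argument based on the definition of the $L_2$ norm. The intuition is that if too many elements had frequency exceeding $\gamma L_2$, their squared contributions would already exceed the total $L_2^2 = \sum_i n_i^2$, which is impossible.

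Concretely, I would let $H = \{i \in [u] : n_i > \gamma L_2\}$ denote the set of heavy elements, and let $m = |H|$ be its cardinality. Since each term $n_j^2$ is nonnegative, the $L_2$ norm satisfies
\[
L_2^2 \;=\; \sum_{j=1}^{u} n_j^2 \;\geq\; \sum_{i \in H} n_i^2 \;>\; \sum_{i \in H} (\gamma L_2)^2 \;=\; m \gamma^2 L_2^2,
\]
where the strict inequality uses the defining condition $n_i > \gamma L_2$ for every $i \in H$. Dividing through by $L_2^2$ (assuming $L_2 > 0$; otherwise the set $H$ is trivially empty) yields $1 > m \gamma^2$, hence $m < 1/\gamma^2$, so at most $\lfloor 1/\gamma^2 \rfloor \le 1/\gamma^2$ elements can be heavy.

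I do not anticipate any real obstacle here: the statement is a one-line pigeonhole-style bound that follows immediately from the definition of $L_2 = (\sum_i n_i^2)^{1/2}$. The only mild subtleties are (i) handling the degenerate case $L_2 = 0$ (where the claim is vacuous because no element has positive frequency) and (ii) noting that the choice $k = 1/\gamma^2 + 1$ in the definition of \charikar\ is precisely calibrated so that the $k$-th most frequent element cannot itself be a $\gamma L_2$-heavy hitter unless it is among the heavy set, which ties this counting bound to the correctness of invoking Charikar et al.'s top-$k$ guarantee with $\phi_k \geq \gamma L_2$.
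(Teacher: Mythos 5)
Your proof is correct and is essentially the same argument as the paper's: both bound $L_2^2 \ge \sum_{i \in H} n_i^2 > m\gamma^2 L_2^2$ (the paper states it as $L_2 \ge \sqrt{m}\,\gamma L_2$) and conclude $m \le 1/\gamma^2$. Your added remarks on the degenerate case $L_2=0$ and the strictness of the inequality are fine but not needed.
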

\begin{proof}
Assume that there are $m$ elements with frequency higher than $\gamma L_2$. It follows that
 $L_2 = (\sum_{j=1}^u n_j^2)^{1/2} \ge \sqrt{m} \cdot \gamma L_2$. Clearly, $m \le \frac1{\gamma^2}$.
\qed
\end{proof}
\noindent
Setting $k=\frac{1}{\gamma^2}+1$
ensures that the output list is large enough to contain all the elements with frequency
$\gamma L_2$ or more.

However,
\charikar\ does not guarantee anymore to output all the elements with frequency higher than
$(1+\epsilon')\fk$ and no element of frequency less than $(1-\epsilon')\fk$
(Lemma~5 of~\cite{CCF02}), since the value of $b$ might not satisfy the conditions of that lemma.

We can still follow the analysis of~\cite{CCF02} and claim that
the frequency approximation of each element is still bounded (Lemma~4 of~\cite{CCF02}),
\begin{lemma}\label{lem:varCharikarFreqApprox}
With probability at least~$1-\delta$, for all elements $i\in[u]$ in the stream~$S$,
\[
|\hat n_i - n_i| < 8\frac{L_2(S)}{\sqrt{b}} <\tfrac12\gamma\epsilon' L_2(S)
\]
where $\hat n_i$ is the approximated frequency of~$i$ calculated by \charikar, and~$n_i$
is the real frequency of the element~$i$.
\end{lemma}
The proof is immediate from~\cite{CCF02}. 
The above lemma allows us to bound the frequencies of the outputted elements

\begin{proposition}\label{prop:allHigh}
The $(\gamma,\epsilon',\delta)-$\charikar\ algorithm outputs all the elements whose frequency is at least
$(1+\epsilon')\gamma L_2(S)$.
\end{proposition}
\begin{proof}
An element is not in the output list only if there are (at least)
$k$ elements with higher approximated frequency.
Due to Lemma~\ref{lem:varCharikarFreqApprox}, any element $i$ with frequency
$n_i > (1+\epsilon')\gamma L_2(S)$ has an estimated frequency of at least $\hat n_i \ge (1+\frac12\epsilon')\gamma L_2(S)$, so it can be replaced only by an element with frequency higher than $\gamma L_2(S)$, however, there are at most $k$ elements with $n_i \ge \gamma L_2(S)$, specifically,
at most $k-1$ elements other than $i$ itself,
which completes the proof.
\qed
\end{proof}

The memory consumption of \charikar\ is bounded
by $O((k+ b)\log\frac{|S|}\delta))$~\cite{CCF02}, which in our case gives
$O(\frac1{\gamma^2\epsilon'^2}\log\frac{|S|}\delta)$.

\end{document}